\documentclass[%
reprint,
nofootinbib,
 amsmath,amssymb,
 aps,
 pra,
showkeys,
]{revtex4-2}

\usepackage{orcidlink}
\usepackage{mathrsfs}
\usepackage{amsthm}

\usepackage{dcolumn}
\usepackage{bm}

\usepackage{xfrac}
\usepackage[mathcal]{euscript}

\usepackage{url}
\usepackage{hyperref}
\usepackage{color}
\definecolor{refcolor}{RGB}{0,0,190}
\hypersetup{
    colorlinks,
    citecolor=refcolor,
    filecolor=refcolor,
    linkcolor=refcolor,
    urlcolor=refcolor
}

\usepackage{booktabs}

\usepackage{bookmark}
\bookmarksetup{
  numbered, 
  open,
}

\newtheorem{theorem}{Theorem}

\newtheorem{corollary}{Corollary}

\theoremstyle{remark}
\newtheorem{remark}{Remark}
\newtheorem{example}{Example}

\newtheorem{refute}{Refutation}

\theoremstyle{definition}
\newtheorem{problem}{Problem}
\newtheorem{definition}{Definition}

\newtheorem{observation}{Observation}

\newtheorem{question}{Question}
\newtheorem{answer}{Answer}

\newtheorem{challenge}{Challenge}
\newtheorem{solution}{Solution}

\theoremstyle{definition}

\renewcommand{\thedefCustom}{\arabic{definition}}
\makeatletter
\newcommand{\setdefCustomtag}[1]{
  \let\oldthedefCustom\thedefCustom
  \renewcommand{\thedefCustom}{#1}
  \g@addto@macro\enddefCustom{
    \global\let\thedefCustom\oldthedefCustom}
  }
\makeatother

\newtheorem{principleCustom}{Metaprinciple}
\renewcommand{\theprincipleCustom}{\arabic{principle}}
\makeatletter
\newcommand{\setprincipleCustomtag}[1]{
  \let\oldtheprincipleCustom\theprincipleCustom
  \renewcommand{\theprincipleCustom}{#1}
  \g@addto@macro\endprincipleCustom{
    \global\let\theprincipleCustom\oldtheprincipleCustom}
  }
\makeatother

\renewcommand{\thecondition}{\arabic{condition}}
\makeatletter
\newcommand{\setconditiontag}[1]{
  \let\oldthecondition\thecondition
  \renewcommand{\thecondition}{#1}
  \g@addto@macro\endcondition{
    \global\let\thecondition\oldthecondition}
  }
\makeatother

\renewcommand{\theassumption}{\arabic{assumption}}
\makeatletter
\newcommand{\setassumptiontag}[1]{
  \let\oldtheassumption\theassumption
  \renewcommand{\theassumption}{#1}
  \g@addto@macro\endassumption{
    \global\let\theassumption\oldtheassumption}
  }
\makeatother

\renewcommand{\thelaw}{\arabic{law}}
\makeatletter
\newcommand{\setlawtag}[1]{
  \let\oldthelaw\thelaw
  \renewcommand{\thelaw}{#1}
  \g@addto@macro\endlaw{
    \global\let\thelaw\oldthelaw}
  }
\makeatother

\newtheorem{belief}{Belief}
\renewcommand{\thebelief}{\arabic{belief}}
\makeatletter
\newcommand{\setbelieftag}[1]{
  \let\oldthebelief\thebelief
  \renewcommand{\thebelief}{#1}
  \g@addto@macro\endbelief{
    \global\let\thebelief\oldthebelief}
  }
\makeatother

\renewcommand{\theclaim}{\arabic{claim}}
\makeatletter
\newcommand{\setclaimtag}[1]{
  \let\oldtheclaim\theclaim
  \renewcommand{\theclaim}{#1}
  \g@addto@macro\endclaim{
    \global\let\theclaim\oldtheclaim}
  }
\makeatother

\theoremstyle{remark}

\renewcommand{\thepointItem}{\quad\arabic{pointItem}}
\makeatletter
\newcommand{\setpointItemtag}[1]{
  \let\oldthepointItem\thepointItem
  \renewcommand{\thepointItem}{#1}
  \g@addto@macro\endpointItem{
    \global\let\thepointItem\oldthepointItem}
  }
\makeatother


\begin{document}

\newcommand{\orcid}[1]{\href{https://orcid.org/#1}{\textcolor[HTML]{A6CE39}{\aiOrcid}}}

\def\bibsection{\section*{\refname}} 
\newcommand{\pbref}[1]{\ref{#1} (\nameref*{#1})}
   
\def\({\big(}
\def\){\big)}

\newcommand{\tn}{\textnormal}
\newcommand{\ds}{\displaystyle}
\newcommand{\dsfrac}[2]{\displaystyle{\frac{#1}{#2}}}

\newcommand{\boplus}{\textstyle{\bigoplus}}
\newcommand{\botimes}{\textstyle{\bigotimes}}
\newcommand{\bcup}{\textstyle{\bigcup}}
\newcommand{\bsqcup}{\textstyle{\bigsqcup}}
\newcommand{\bcap}{\textstyle{\bigcap}}

\newcommand{\struct}{\mc{S}}
\newcommand{\kind}{\mc{K}}

\newcommand{\dddots}{\rotatebox[origin=t]{135}{$\cdots$}}

\newcommand{\statespace}{\mathcal{S}}
\newcommand{\hilbert}{\mathcal{H}}
\newcommand{\vectorspace}{\mathcal{V}}
\newcommand{\mc}[1]{\mathcal{#1}}
\newcommand{\ms}[1]{\mathscr{#1}}
\newcommand{\mf}[1]{\mathfrak{#1}}
\newcommand{\dU}{\wh{\mc{U}}}

\newcommand{\wh}[1]{\widehat{#1}}
\newcommand{\dwh}[1]{\wh{\rule{0ex}{1.3ex}\smash{\wh{\hfill{#1}\,}}}}

\newcommand{\wt}[1]{\widetilde{#1}}
\newcommand{\wht}[1]{\widehat{\widetilde{#1}}}
\newcommand{\on}[1]{\operatorname{#1}}

\newcommand{\qmU}{$\mathscr{U}$}
\newcommand{\qmR}{$\mathscr{R}$}
\newcommand{\qmUR}{$\mathscr{UR}$}
\newcommand{\qmDR}{$\mathscr{DR}$}

\newcommand{\R}{\mathbb{R}}
\newcommand{\C}{\mathbb{C}}
\newcommand{\Z}{\mathbb{Z}}
\newcommand{\K}{\mathbb{K}}
\newcommand{\N}{\mathbb{N}}
\newcommand{\V}{\mathbb{V}}
\newcommand{\Prj}{\mathcal{P}}
\newcommand{\abs}[1]{|#1|}

\newcommand{\de}{\operatorname{d}}
\newcommand{\tr}{\operatorname{tr}}
\newcommand{\im}{\operatorname{Im}}

\newcommand{\ie}{\textit{i.e.}\ }
\newcommand{\vs}{\textit{vs.}\ }
\newcommand{\eg}{\textit{e.g.}\ }
\newcommand{\cf}{\textit{cf.}\ }
\newcommand{\etc}{\textit{etc}}
\newcommand{\etal}{\textit{et al.}}

\newcommand{\Span}{\tn{span}}
\newcommand{\pde}{PDE}
\newcommand{\U}{\tn{U}}
\newcommand{\SU}{\tn{SU}}
\newcommand{\GL}{\tn{GL}}

\newcommand{\schrod}{Schr\"odinger}
\newcommand{\vonneum}{Liouville-von Neumann}
\newcommand{\ks}{Kochen-Specker}
\newcommand{\leggarg}{Leggett-Garg}
\newcommand{\bra}[1]{\langle#1|}
\newcommand{\ket}[1]{|#1\rangle}
\newcommand{\kett}[1]{|\!\!|#1\rangle\!\!\rangle}
\newcommand{\proj}[1]{\ket{#1}\bra{#1}}
\newcommand{\braket}[2]{\langle#1|#2\rangle}
\newcommand{\ketbra}[2]{|#1\rangle\langle#2|}
\newcommand{\expectation}[1]{\langle#1\rangle}
\newcommand{\Herm}{\tn{Herm}}
\newcommand{\Sym}[1]{\tn{Sym}_{#1}}
\newcommand{\meanvalue}[2]{\langle{#1}\rangle_{#2}}
\newcommand{\Prob}{\tn{Pr}}
\newcommand{\pobs}[1]{\mathsf{#1}}
\newcommand{\obs}[1]{\wh{\pobs{#1}}}
\newcommand{\uop}[1]{\wh{\mathbf{#1}}}

\newcommand{\weightU}[5]{\big[{#2}{}_{#3}\overset{#1}{\rightarrow}{#4}{}_{#5}\big]}
\newcommand{\weightUT}[8]{\big[{#3}{}_{#4}\overset{#1}{\rightarrow}{#5}{}_{#6}\overset{#2}{\rightarrow}{#7}{}_{#8}\big]}
\newcommand{\weight}[4]{\weightU{}{#1}{#2}{#3}{#4}}
\newcommand{\weightT}[6]{\weightUT{}{}{#1}{#2}{#3}{#4}{#5}{#6}}

\newcommand{\btimes}{\boxtimes}
\newcommand{\btimess}{{\boxtimes_s}}

\newcommand{\h}{\mathbf{(2\pi\hbar)}}
\newcommand{\x}{\mathbf{x}}
\newcommand{\vel}{\mathbf{v}}
\newcommand{\xThree}{\boldsymbol{x}}
\newcommand{\z}{\mathbf{z}}
\newcommand{\q}{\mathbf{q}}
\newcommand{\p}{\mathbf{p}}
\newcommand{\0}{\mathbf{0}}
\newcommand{\annih}{\widehat{\mathbf{a}}}

\newcommand{\cs}{\mathscr{C}}
\newcommand{\ps}{\mathscr{P}}
\newcommand{\xhat}{\widehat{\x}}
\newcommand{\phat}{\widehat{\mathbf{p}}}
\newcommand{\fqproj}[1]{\Pi_{#1}}
\newcommand{\cqproj}[1]{\wh{\Pi}_{#1}}
\newcommand{\cproj}[1]{\wh{\Pi}^{\perp}_{#1}}

\newcommand{\M}{\mathbb{E}_3}
\newcommand{\D}{\mathbf{D}}
\newcommand{\dn}{\tn{d}}
\newcommand{\db}{\mathbf{d}}
\newcommand{\n}{\mathbf{n}}
\newcommand{\m}{\mathbf{m}}
\newcommand{\F}[1]{\mathcal{F}_{#1}}
\newcommand{\Fvacuumfield}{\widetilde{\mathcal{F}}^0}
\newcommand{\nD}[1]{|{#1}|}
\newcommand{\Lin}{\mathcal{L}}
\newcommand{\End}{\tn{End}}
\newcommand{\vbundle}[4]{{#1}\to {#2} \stackrel{\pi_{#3}}{\to} {#4}}
\newcommand{\vbundlex}[1]{\vbundle{V_{#1}}{E_{#1}}{#1}{M_{#1}}}
\newcommand{\rep}{\rho_{\scriptscriptstyle\btimes}}

\newcommand{\intl}[1]{\int\limits_{#1}}

\newcommand{\moyalBracket}[1]{\{\mskip-5mu\{#1\}\mskip-5mu\}}

\newcommand{\Hint}{H_{\tn{int}}}

\newcommand{\quot}[1]{``#1''}

\def\sref #1{\S\ref{#1}}

\newcommand{\dBB}{de Broglie--Bohm}
\newcommand{\dBBt}{{\dBB} theory}
\newcommand{\pwt}{pilot-wave theory}
\newcommand{\PWT}{PWT}
\newcommand{\NRQM}{{\textbf{NRQM}}}

\newcommand{\image}[3]{
\begin{center}
\begin{figure}[!ht]
\includegraphics[width=#2\textwidth]{#1}
\caption{\small{\label{#1}#3}}
\end{figure}
\end{center}
\vspace{-0.40in}
}

\newcommand{\TQS}{\ref{def:TQS}}
\newcommand{\HSF}{\ref{thesis:HSF}}
\newcommand{\MQS}{\ref{def:MQS}}
\newcommand{\NMU}{\ref{pp:NMU}}

\title{Does quantum mechanics require ``conspiracy''?}

\author{Ovidiu Cristinel Stoica\ \orcidlink{0000-0002-2765-1562}}
\affiliation{
 Dept. of Theoretical Physics, NIPNE---HH, Bucharest, Romania. \\
	Email: \href{mailto:cristi.stoica@theory.nipne.ro}{cristi.stoica@theory.nipne.ro},  \href{mailto:holotronix@gmail.com}{holotronix@gmail.com}
	}%

\date{\today}

\begin{abstract}
Quantum states containing records of incompatible outcomes of quantum measurements are valid states in the tensor-product Hilbert space. Since they contain false records, they conflict with the Born rule and with our observations. I show that excluding them requires a fine-tuning to an extremely restricted subspace of the Hilbert space that seems ``conspiratorial'', in the sense that (1) it seems to depend on future events that involve records (including measurement settings) and on the dynamical law (normally thought to be independent of the initial conditions), and (2) it violates Statistical Independence, even when it is valid in the context of Bell's theorem. 
To solve the puzzle, I build a model in which, by changing the dynamical law, the same initial conditions can lead to different histories in which the validity of records is relative to the new dynamical law. This relative validity of the records may restore causality, but the initial conditions still must depend, at least partially, on the dynamical law.
While violations of Statistical Independence are often seen as non-scientific, they turn out to be needed to ensure the validity of records and our own memories and, by this, of science itself.
A Past Hypothesis is needed to ensure the existence of records and turns out to require violations of Statistical Independence. It is not excluded that its explanation, still unknown, ensures such violations in the way needed by local interpretations of quantum mechanics.
I suggest that an as-yet unknown law or superselection rule may restrict the full tensor-product Hilbert space to the very special subspace required by the validity of records and the Past Hypothesis.
\end{abstract}

\keywords{Born rule; fine-tuning; statistical independence; records; memories; arrow of time; past hypothesis; decoherence; Bell's theorem; interpretations of quantum mechanics; superdeterminism.}

\maketitle
\section{Introduction}
\label{s:intro}

Quantum mechanics, like other theories, is formulated from a God's-eye perspective. But, as parts of the world we observe, we are limited to a worm's-eye perspective.
If, in the present time, we are part of a random state of the universe, this would most likely contain incompatible records from which we would never be able to guess the laws of quantum mechanics, particularly the Born rule. 

An example of such a state is one containing $n$ records of repeated spin measurement of the same silver atom so that the records of the $n$ outcomes are random values $\pm\frac12$, and not the same value repeated $n$ times.
This state is valid in the tensor-product Hilbert space.
But the records it contains could not come from actual repeated quantum measurements. We practically never observe such states.

The simple fact that we exist and could discover quantum mechanics indicates that the physical law is user-friendly enough to allow our memories to form and be reliable, to reflect the evolution of our universe so that we can guess its laws, including the Born rule. We are led to a \emph{``the universe does not mislead us''} metaprinciple:

\setprincipleCustomtag{NMU}
\begin{principleCustom}[Non-Misleading Universe]
\label{pp:NMU}
The records of the experimental results and the memories of the observers reflect the actual history of the universe.
\end{principleCustom}

Without this, science and even life would be impossible.
But Metaprinciple {\NMU}, as we shall see, requires severe restrictions of the possible states. We will explore the relationship between this fine-tuning and several common-sense beliefs.
The first belief is:
\begin{belief}[Universality]
\label{belief:universality}
Quantum mechanics, including the Born rule and the results of quantum experiments, respect Metaprinciple {\NMU} for most initial conditions.
\end{belief}

Another belief is that of \emph{Statistical Independence} (SI). 
We assume that there are enough degrees of freedom so that any two systems separated in space can be put in independent and statistically uncorrelated states.

\begin{definition}
\label{def:SI}
Two events $A$ and $B$ are \emph{statistically independent} if $\Prob\{AB\} = \Prob\{A\} \Prob\{B\}$ (\cite{Gallager2013StochasticProcessesTheoryForApplications} p. 10).
In particular, if each of two statistically independent events $A$ and $B$ are \emph{possible} ($\Prob\{A\} > 0$ and $\Prob\{B\} > 0$), they are possible together ($\Prob\{AB\}>0$).
Therefore, if SI is true for the events that two subsystems are in particular states, the following should be true as well:
\end{definition}

\begin{belief}[Subsystem Independence]
\label{belief:sys_indep}
Let $A$ and $B$ be any two subsystems with no common parts.
If $A$ can possibly be in the state $\alpha$ and
$B$ can possibly be in the state $\beta$, the combined system can possibly be in the state $\alpha\otimes\beta$.
\end{belief}

Belief \ref{belief:sys_indep} is the core reason we take as the Hilbert space of a composite system the tensor product of the Hilbert spaces of each of the systems.
I will show that this, and consequently SI, is contradicted, although \emph{Bell's weaker assumption of SI is not contradicted} (see Answer \ref{a:SI}).
Bell's assumption is about the independence of the state of the observed system (including possible hidden variables) of the choice of what observable will be measured \cite{Bell2004SpeakableUnspeakable}.
It is sometimes called \emph{settings independence}, 
 \emph{measurement independence},
 or even the \emph{free-will assumption} \cite{ConwayKochen2006TheFreeWillTheorem}.

The following belief is already known to be violated for ``Boltzmann brains'', usually attributed to fluctuations:
\begin{belief}[For-Granted Memory]
\label{belief:memory}
In the tensor-product Hilbert space formulation of quantum mechanics, past events always leave reliable records in the present state.
\end{belief}

It makes sense to think that no ``Laplace demon'' knowing the dynamical law and the future histories is needed to determine what initial conditions ensure Metaprinciple {\NMU}. This can be stated as the following beliefs:

\begin{belief}[No Input From Dynamical Law]
\label{belief:no_input_from_evol}
Initial conditions are independent of the dynamical law of the system.
\end{belief}

\begin{belief}[No Input From Future]
\label{belief:no_input_from_future}
Initial conditions are independent of future events in history, in particular of those involving records.
\end{belief}

In Section \sref{s:puzzle}, I show that, in quantum mechanics, Metaprinciple {\NMU} contradicts Beliefs \ref{belief:universality}, \ref{belief:sys_indep}, and \ref{belief:memory}, and seems to challenge Beliefs \ref{belief:no_input_from_evol} and \ref{belief:no_input_from_future}. 

In Section \sref{s:counterexample}, I gradually build a model that shows that Belief \ref{belief:no_input_from_evol} is partially not contradicted, and Belief \ref{belief:no_input_from_future} is not contradicted at all, at least for interpretations that do not violate Bell's restricted Statistical Independence.
It turns out that for this to work, quantum mechanics needs to include a certain sophistication, and the Past Hypothesis, according to which the initial state of the universe had a very low entropy, is required in a way that involves the dynamical law.

In Section \sref{s:PH}, I discuss the implications for the Past Hypothesis.

In Section \sref{s:SI}, I discuss the violation of Statistical Independence resulting from the Theorem and its relationship with the particular case used by Bell in his theorem.

In Section \sref{s:new-law}, I discuss the possibility that a new law may solve the puzzle.

\section{The Puzzle}
\label{s:puzzle}

In this Section, I state the puzzle in the form of a Theorem.
In Section \sref{s:counterexample}, I will try to build a model to resolve the puzzle and succeed only partially.

\begin{theorem}
\label{thm:main}
Metaprinciple {\NMU} for quantum mechanics requires the initial states to belong to an extremely restricted subspace of the Hilbert space in a way that contradicts Beliefs \ref{belief:universality}, \ref{belief:sys_indep}, and \ref{belief:memory}, and seems to challenge Beliefs \ref{belief:no_input_from_evol} and \ref{belief:no_input_from_future} (which I will try to restore in Section \sref{s:counterexample}).
\end{theorem}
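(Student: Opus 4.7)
The plan is to formalize the informal argument sketched in the introduction by exhibiting, in the tensor product Hilbert space, explicit classes of states that violate Metaprinciple \NMU, and then to read off what each of the five beliefs demands. I would proceed in three stages.

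Stage 1 (setup): I adopt the von Neumann measurement scheme, in which an eigenstate $\ket{s_i}_S$ coupled to a fresh pointer $\ket{r_0}_R$ evolves unitarily to $\ket{s_i}_S\otimes\ket{r_i}_R$, with the pointer states $\ket{r_i}_R$ mutually orthogonal. Applying this to $n$ successive measurements of the same observable on the same prepared eigenstate, as in the silver atom example of the introduction, gives a joint Hilbert space $\hilbert_S\otimes\hilbert_{R_1}\otimes\cdots\otimes\hilbert_{R_n}$ whose dimension grows multiplicatively in $n$.

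Stage 2 (contradictions with Beliefs \ref{belief:universality}, \ref{belief:sys_indep}, \ref{belief:memory}): The Born rule together with \NMU\ singles out the subspace spanned by the ``consistent-record'' vectors $\ket{s_i}_S\otimes\ket{r_i}_{R_1}\otimes\cdots\otimes\ket{r_i}_{R_n}$, whose dimension equals $\dim\hilbert_S$, whereas the full tensor product has dimension $\dim\hilbert_S\cdot\prod_k\dim\hilbert_{R_k}$, exponentially larger in $n$. Generic tensor factors $\ket{r_{i_1}}_{R_1}\otimes\cdots\otimes\ket{r_{i_n}}_{R_n}$ are perfectly legitimate vectors, but most of them encode mutually incompatible records. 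This (a) contradicts Belief \ref{belief:sys_indep}, since each $\ket{r_{i_k}}_{R_k}$ is individually possible and yet most product combinations are forbidden by \NMU; (b) contradicts Belief \ref{belief:universality}, since only initial states in a relatively tiny subspace evolve into \NMU-respecting records; and (c) contradicts Belief \ref{belief:memory}, because a generic vector in the tensor product does not encode a consistent past at all.

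Stage 3 (apparent challenges to Beliefs \ref{belief:no_input_from_evol}, \ref{belief:no_input_from_future}): Pulling the consistent-record subspace back along the unitary evolution $U$ produces the admissible subspace of \emph{initial} conditions; this pullback changes with $U$, so naively the allowed initial states depend on the dynamical law, which challenges Belief \ref{belief:no_input_from_evol}. Moreover, the consistent-record subspace is itself defined with reference to which measurements will be performed and which records will form, so the admissible initial subspace inherits this forward reference, challenging Belief \ref{belief:no_input_from_future}. I would emphasize that, as the statement says, these are only \emph{apparent} challenges, with Sec. \sref{s:counterexample} attempting their partial restoration.

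\textbf{Main obstacle.} The subtle point is to argue ``extremely restricted'' in a way that is robust: one must compare dimensions (or codimensions of the consistent-record projector) rather than just exhibit a single forbidden vector, so that the restriction is shown to tighten exponentially with the number of records and is not an artefact of a basis choice. A secondary difficulty is cleanly separating the strict contradictions (Beliefs \ref{belief:universality}--\ref{belief:memory}) from the merely apparent ones (Beliefs \ref{belief:no_input_from_evol}--\ref{belief:no_input_from_future}), so as to leave room for the partial rescue carried out later in the paper.
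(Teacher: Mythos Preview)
Your three-stage plan captures the core idea correctly for purely unitary interpretations (MWI, consistent histories, and the wavefunction sector of PWT): exhibit forbidden record states, pull them back by $\uop{U}^\dagger$ to identify forbidden initial states, and count dimensions. In that regime your outline matches the paper's Observations~\ref{obs:forbidden}--\ref{obs:allowed} and the unitary part of Refutation~\ref{refute:universality}.

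There is, however, a genuine structural gap. The theorem is stated for quantum mechanics in general, and the paper's proof is organized interpretation by interpretation. For standard quantum mechanics with the Projection Postulate, and for GRW-type collapse theories, your Stage~3 move of ``pulling the consistent-record subspace back along the unitary evolution $U$'' does not apply: the dynamics is not globally unitary, and many distinct initial states can reach the same final state through different projection histories. The paper handles this with a separate argument: since $\abs{\braket{\Psi_2}{\Psi_1}}^2=\abs{\braket{\Psi_1}{\Psi_2}}^2$, projections are probabilistically time-symmetric, so one can propagate a forbidden state at $t_2$ backward, ``unprojecting'' at each collapse, to obtain a whole family of initial states each of which can evolve (with nonzero probability) into the forbidden state. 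All of these, and anything not orthogonal to them, must be excluded. Without this step your argument establishes the restriction only for no-collapse interpretations.

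A smaller but related point: in your Stage~2 you invoke a ``consistent-record subspace'' but do not say why the allowed states form a \emph{linear subspace} rather than merely a set of product vectors. In the paper this is argued explicitly: in branching interpretations, any initial state with a nonzero component along a forbidden $\Psi_{j,k}(t_i)$ produces a forbidden branch with nonzero amplitude, so the allowed states must be orthogonal to all forbidden ones, hence form the orthogonal complement $\wt{\hilbert}$. Your dimension count is fine once this is in place, but as written it compares the span of the diagonal vectors $\ket{s_i}\otimes\ket{r_i}^{\otimes n}$ to the full tensor product without justifying that superpositions mixing consistent and inconsistent records are themselves excluded.
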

\begin{proof}
Consider a closed quantum system which includes observed systems, measuring devices, and observers. This may be the entire universe. Its states are represented by unit vectors in a separable Hilbert space $\hilbert$, and evolve governed by the {\schrod} equation with the Hamiltonian $\obs{H}$. In terms of the unitary {evolution operator} 
 $\uop{U}_{t,t_0}:=e^{-\frac{i}{\hbar}(t-t_0)\obs{H}}$ between the times $t_0$ and $t$, the evolution of an initial state vector $\Psi(t_0)\in\hilbert$ at $t_0$ is
\begin{equation}
\label{eq:unitary_evolution}
\Psi(t)=\uop{U}_{t,t_0}\Psi(t_0).
\end{equation}

Suppose for simplicity that our system contains a system to be observed $S$, with Hilbert space $\hilbert_{S}$ of finite dimension $\dim\hilbert_{S}=n<\infty$, and a measuring device whose pointer is represented in the Hilbert space $\hilbert_{M_{\pobs{A}}}$ of dimension $n+1$. Then $\hilbert=\hilbert_{S}\otimes\hilbert_{M_{\pobs{A}}}\otimes\hilbert_{\mc{E}}$, where $\hilbert_{\mc{E}}$ represents everything else, including the other parts of the measuring device.
Let $\obs{A}$ be a Hermitian operator on $\hilbert_{S}$ representing the observable of interest, with eigenbasis 
$\(\psi^{\pobs{A}}_{1},\ldots,\psi^{\pobs{A}}_{n}\)$.
Let $\obs{Z}^{\pobs{A}}$ be the pointer observable, with eigenbasis 
$\(\zeta^{\pobs{A}}_{0},\zeta^{\pobs{A}}_{1},\ldots,\zeta^{\pobs{A}}_{n}\)$, where $\zeta^{\pobs{A}}_{0}$ represents the ``ready'' state of the pointer.
We assume that the observable and the pointer have nondegenerate spectra, and the measurement is ideal.
We work in the interaction picture, which allows us to treat the observed degrees of freedom as stationary and the pointer states as stationary before and after the measurement.
Let the measurement of $\obs{A}$ take place between $t_0$ and $t_1>t_0$, leading to the superposition
\begin{equation}
\label{eq:measurement_a}
\Psi(t_1)
=\uop{U}_{t_1,t_0}\psi\otimes\zeta^{\pobs{A}}_{0}\otimes\ldots 
=\sum_{j}\braket{\psi^{\pobs{A}}_{j}}{\psi}\psi^{\pobs{A}}_{j}\otimes \zeta^{\pobs{A}}_{j}\otimes\ldots
\end{equation}

To resolve the superposition from Equation \eqref{eq:measurement_a} into definite outcomes, one usually invokes projection, objective collapse, decoherence into branches, additional hidden variables, etc.

The results from this article apply to all these options.
The \emph{Born rule} states that the probability that at $t_1$ the pointer is in the state $\zeta^{\pobs{A}}_{j}$ is $\abs{\braket{\psi^{\pobs{A}}_{j}}{\psi}}^2$.

Consider a second measurement of an observable $\obs{B}$ of the system $S$, with eigenbasis $\(\psi^{\pobs{B}}_{1},\ldots,\psi^{\pobs{B}}_{n}\)$. Let the pointer observable of the second apparatus be $\obs{Z}^{\pobs{B}}$, with eigenbasis  $\(\zeta^{\pobs{B}}_{0},\zeta^{\pobs{B}}_{1},\ldots,\zeta^{\pobs{B}}_{n}\)$, where $\zeta^{\pobs{B}}_{0}$ is the ``ready'' state.
The total Hilbert space is now $\hilbert=\hilbert_{S}\otimes\hilbert_{M_{\pobs{A}}}\otimes\hilbert_{M_{\pobs{B}}}\otimes\hilbert_{\mc{E}}$.

The measurement of $\obs{B}$ takes place after the first measurement, ending before $t_2>t_1$. It leads to
\begin{equation}
\label{eq:measurement_ab}
\begin{aligned}
\Psi(t_2)
&=\uop{U}_{t_2,t_1}\uop{U}_{t_1,t_0}\psi\otimes\zeta^{\pobs{A}}_{0}\otimes\zeta^{\pobs{B}}_{0}\otimes\ldots \\
&=\uop{U}_{t_2,t_1}\sum_{j}\braket{\psi^{\pobs{A}}_{j}}{\psi}\psi^{\pobs{A}}_{j}\otimes \zeta^{\pobs{A}}_{j}\otimes\zeta^{\pobs{B}}_{0}\otimes\ldots \\
&=\sum_{j,k}\braket{\psi^{\pobs{A}}_{j}}{\psi}\braket{\psi^{\pobs{B}}_{k}}{\psi^{\pobs{A}}_{j}}\psi^{\pobs{B}}_{k}\otimes \zeta^{\pobs{A}}_{j}\otimes\zeta^{\pobs{B}}_{k}\otimes\ldots
\end{aligned}
\end{equation}

The probability that at $t_2$ the first pointer state is $\zeta^{\pobs{A}}_{j}$ and the second pointer state is $\zeta^{\pobs{B}}_{k}$ is $\abs{\braket{\psi^{\pobs{A}}_{j}}{\psi}\braket{\psi^{\pobs{B}}_{k}}{\psi^{\pobs{A}}_{j}}}^2$.
This vanishes if $\obs{A}=\obs{B}$ and $j\neq k$, and we obtain
\begin{observation}
\label{obs:misleading-states}
The Born rule forbids orthogonal results for repeated measurements, {\eg} if $\obs{A}=\obs{B}$, the states $\psi^{\pobs{B}}_{k}\otimes \zeta^{\pobs{A}}_{j}\otimes\zeta^{\pobs{B}}_{k}\otimes\ldots$ with $j\neq k$ are
practically forbidden at $t_2$, in the sense that to ensure the Born rule, such states must be extremely rare.
\end{observation}

\begin{observation}
\label{obs:allowed}
However, \emph{a priori}, all unit vectors in $\hilbert$ are possible initial conditions at the initial time $t_i<t_0$ of the universe, including, for any $j$ and $k$, the vectors
\begin{equation}
\label{eq:initial_state}
\Psi_{j,k}(t_i):=\uop{U}_{t_2,t_i}^\dagger\psi^{\pobs{B}}_{k}\otimes \zeta^{\pobs{A}}_{j}\otimes\zeta^{\pobs{B}}_{k}\otimes\ldots
\end{equation}
\end{observation}

The uniform probability distribution on the projective Hilbert space gives equal probabilities to all possible states $\Psi_{j,k}(t_i)$ at any time $t_i$, but the Born rule gives a totally different probability.
If $\obs{A}=\obs{B}$ (or in general $[\obs{A},\obs{B}]=\obs{0}$), the probability vanishes for $j\neq k$.

In practice, quantum measurements are not exactly sharp, so states as in \mbox{Observation \ref{obs:misleading-states}} are not exactly forbidden, but in the case of repeated measurements with $\obs{A}=\obs{B}$ the probability distribution is concentrated on the states $\Psi_{j,j}(t_i)$ and is negligible for the other states.
Because they contain misleading records, we will call them ``misleading states''.
Misleading states are possible, albeit extremely improbable.

In general, including if $[\obs{A},\obs{B}]\neq \obs{0}$, the Born rule gives, for the state $\Psi_{j,k}(t_i)$, the probability $\abs{\braket{\psi^{\pobs{A}}_{j}}{\psi}\braket{\psi^{\pobs{B}}_{k}}{\psi^{\pobs{A}}_{j}}}^2$, which again is very different from the uniform probability distribution on the projective Hilbert space. The difference is significant enough so that we could infer from experiments the Born rule and not any other probabilistic rule.
In this case, the states $\Psi_{j,k}(t_i)$ are allowed even if $j\neq k$, as long as the probability does not vanish, but the initial conditions are constrained statistically.
One may hope that an initially uniform probability distribution evolves, converging towards the Born rule, perhaps by successive projections or decoherence. But if this were true, it should have been true for the limiting case $\obs{A}=\obs{B}$ as well.
Therefore, the Born rule must come from a different initial probability distribution than the uniform one.

\begin{observation}
\label{obs:not-about}
The problem is not whether a state containing measuring devices in the ``ready'' state and observed systems can evolve, by performing measurements, into misleading states as in Observation \ref{obs:misleading-states}. This is not the case (see Answer \ref{a:why_QM_works}).
But such states are represented by valid vectors in the total tensor-product Hilbert space, and they can be reached by unitary evolution (even including projections) from states like \eqref{eq:initial_state}. This evolution into misleading states avoids the natural course of events, but the resulting state contains records of false accounts of history. They can even contain multiple pieces of camera footage from different angles, and human observers with false memories to attest to the false history.
Nothing seems to prevent this since we can only access the present, not the past.
So, the problem is that there is nothing in the formulation of quantum mechanics that makes misleading states extremely improbable.
\end{observation}

\begin{refute}[of Belief \ref{belief:universality}]
\label{refute:universality}
This part of the analysis depends on the approach to resolve the superposition into definite outcomes.
We consider first unitary approaches based on decoherence (like the \emph{consistent histories} approach \cite{Griffiths1984ConsistentHistories} and the \emph{many-worlds interpretation} (MWI) \cite{Everett1957RelativeStateFormulationOfQuantumMechanics}).
From Observation \ref{obs:misleading-states}, initial states $\Psi_{j,k}(t_i)$ with $\obs{A}=\obs{B}$ and $j\neq k$ are misleading states, because they evolve into misleading states at $t_2$.
Such states would contradict Metaprinciple {\NMU}.
Moreover, all initial states that are not orthogonal on all misleading states of the form $\Psi_{j,k}(t_i)$ should also be extremely unlikely because the nonvanishing component $\Psi_{j,k}(t_i)$ leads to a misleading branch with nonzero amplitude.
Therefore, the states that are not initially misleading have to be orthogonal to all initial states leading to misleading states at any time.
They form a ``very small'' Hilbert subspace $\wt{\hilbert}\ll\hilbert$, which is extremely restricted compared to $\hilbert$ because there are potentially infinitely more orthogonal states that contain invalid records compared to those that contain valid records.

For the \emph{pilot-wave} theory (PWT) and variations \cite{Bohm1952SuggestedInterpretationOfQuantumMechanicsInTermsOfHiddenVariables,SEP-Goldstein2013BohmianMechanics}, the wavefunction also never collapses, but it is completed with ``hidden variables'', {\eg} point-particles with definite positions that resolve the superposition. The initial conditions of the wavefunction are constrained exactly as in the MWI case because the same branching structure is needed to make sure that the configurations of the point particles are stable.

In standard quantum mechanics (SQM), the superposition is resolved by invoking the Projection Postulate.
This happens when the observation causes a macroscopic effect, usually by changing the pointer of the measuring device.
Let $\(\obs{P}_{\alpha}\)_{\alpha}$ be a set of mutually orthogonal projectors that correspond to the macro-states, so that $\sum_{\alpha}\obs{P}_{\alpha}$ is the identity operator $\obs{I}_{\hilbert}$ on $\hilbert$. In our case, these projectors are determined by the eigenstates of the pointer observables, so they are $\obs{I}_{\hilbert_{S}}\otimes\ket{\zeta^{\pobs{A}}_{j}}\bra{\zeta^{\pobs{A}}_{j}}\otimes\ket{\zeta^{\pobs{B}}_{k}}\bra{\zeta^{\pobs{B}}_{k}}\otimes\obs{I}_{\hilbert_{\mc{E}}}$, or some other projectors finer than them, obtained by including other observables that commute with the\linebreak pointer observables.

The Projection Postulate makes it possible that, for any state at $t_2$, there are many possible initial states that can lead to it by unitary evolution alternated with projections.
Let us verify that any misleading state at $t_2$ can be obtained like this.
Suppose that the state vector is projected at a time $t$ from $\Psi_1$ to $\Psi_2$. Since $\abs{\braket{\Psi_2}{\Psi_1}}^2=\abs{\braket{\Psi_1}{\Psi_2}}^2$, the probability that $\Psi_1$ projects to $\Psi_2$ in forward time evolution equals the probability that $\Psi_2$ projects to $\Psi_1$ in backward time evolution. This implies that if we propagate a misleading state at $t_2$ back in time to $t_i$ ``unprojecting'' whenever is needed, we should find a set of initial states at $t_i$ that can evolve forward in time (with projections) into the misleading state at $t_2$ (where I define the result of the ``unprojection'' of a vector $\ket{\Psi}$ with the projector operator $\obs{P}_{\alpha}$ as the set of vectors $\ket{\Psi'}$ so that $\bra{\Psi}\obs{P}_{\alpha}\ket{\Psi'}\neq 0$.) All these initial states should, therefore, be extremely unlikely.
Since any initial state that is not orthogonal to all of them has components that can evolve into misleading states, these should be extremely unlikely as well. This, again, constrains the possible states to an extremely restricted subspace $\wt{\hilbert}$ orthogonal to all initial states that could evolve into misleading states at any future time.

In \emph{collapse theories} \cite{GhirardiRiminiWeber1986GRWInterpretation,sep-qm-collapsetheories}, spontaneous localization is not defined in terms of projectors but by multiplying the wavefunction with a Gaussian function centered at a random point in the configuration space. This happens at random times. Gaussian functions do not form an orthonormal basis, but they partition the identity, so the possible histories with collapses at the same moments of time also add up to the identity, and we can apply similar reasoning as in the SQM case, obtaining the same conclusion.
Moreover, when the wavefunction is multiplied by a Gaussian, the result has tails, and decoherence is required to prevent those tails from interfering because otherwise, the collapse could make the state jump to a too different state, so similar constraints as in MWI are required.

Therefore, in all cases, the Born rule constrains the states to an extremely restricted subspace $\wt{\hilbert}$ of $\hilbert$, and Belief \ref{belief:universality} (Universality) is contradicted.
\end{refute}

\begin{remark}
\label{rem:hilbert_init}
In all cases, the non-misleading states are constrained to an extremely restricted subspace $\wt{\hilbert}\ll\hilbert$. In fact, to protect the Born rule at arbitrary times in the future, the constraints must be valid at all times. But the subspace $\wt{\hilbert}_0\ll\wt{\hilbert}$ of non-misleading initial states that can lead to states in $\wt{\hilbert}$ is even more restricted than $\wt{\hilbert}$.
\end{remark}

\begin{refute}[of Belief \ref{belief:sys_indep}]
\label{refute:sys_indep}
Consider a factorization $\hilbert=\hilbert_1\otimes\hilbert_2$, obtained by dividing the total system into a subsystem $S_1$ and the rest of the world, $S_2$.
The tensor-product basis cannot have all its elements in $\wt{\hilbert}$, because then $\hilbert$ would be included in $\wt{\hilbert}$. Therefore, there are tensor-product states that are forbidden or at least extremely unlikely, contrary to Belief \ref{belief:sys_indep}.
Interestingly, even if $S_1$ consists of a single particle, the Subsystem Independence is violated.
This also contradicts Statistical Independence from Definition \ref{def:SI}.
\end{refute}

\begin{refute}[of Belief \ref{belief:memory}]
\label{refute:memory}
From Observation \ref{obs:misleading-states}, there are states containing invalid records. As seen, avoiding them requires fine-tuning that violates Beliefs \ref{belief:universality} and \ref{belief:sys_indep}.
\end{refute}

\begin{challenge}[of Beliefs \ref{belief:no_input_from_evol} and \ref{belief:no_input_from_future}]
\label{challenge:no_input_from_evol}
To show the independence of $\wt{\hilbert}_0$ from the dynamical law and the validity of future records, we need to show that modifications $\obs{H}'$ of the Hamiltonian $\obs{H}$ lead to valid records. The valid records should be consistent with the modified Hamiltonian $\obs{H}'$.
It is not needed to prove this for all possible modifications of $\obs{H}$, but at least for those that are local and preserve its tensor-product decomposition into Hilbert spaces for elementary particles.
Partial progress, but not a definitive proof, is described in Section \sref{s:counterexample}.
\end{challenge}

This concludes the proof of Theorem \ref{thm:main}.
\end{proof}

\begin{corollary}
\label{thm:SI}
The state of any subsystem $S$ is not completely independent of the state of the rest of the universe, in the sense that there are extremely unlikely and hence practically forbidden states of the form $\psi\otimes\varepsilon$, where $\psi$ is the state of $S$ and $\varepsilon$ is the state of the rest of the universe.
\end{corollary}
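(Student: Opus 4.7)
The plan is to promote Refutation \ref{refute:sys_indep} from the single factorization considered there to an arbitrary subsystem cut, and then read off both assertions of the corollary by contradiction. I would fix any subsystem $S$ with Hilbert space $\hilbert_S$ and write $\hilbert=\hilbert_S\otimes\hilbert_{\tn{rest}}$, where $\hilbert_{\tn{rest}}$ collects all remaining tensor factors of $\hilbert$ (in particular the pointers and environment appearing in the proof of Theorem \ref{thm:main}). Nothing in Refutation \ref{refute:sys_indep} depends on which cut is chosen, so the same linear-algebra argument applies verbatim to this new decomposition.

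Next, I would argue by contradiction. Suppose every product vector $\psi\otimes\varepsilon\in\hilbert_S\otimes\hilbert_{\tn{rest}}$ lies in the allowed subspace $\wt{\hilbert}$. Pick any orthonormal bases $\{\psi_i\}$ of $\hilbert_S$ and $\{\varepsilon_j\}$ of $\hilbert_{\tn{rest}}$. Then $\{\psi_i\otimes\varepsilon_j\}_{i,j}$ is an orthonormal basis of $\hilbert$ sitting entirely inside $\wt{\hilbert}$. Since $\wt{\hilbert}$ is a closed linear subspace of $\hilbert$, it must contain the closed span of this basis, which forces $\wt{\hilbert}=\hilbert$. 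This contradicts Theorem \ref{thm:main}, whose proof exhibited explicit forbidden vectors $\Psi_{j,k}(t_i)$ with $j\neq k$, witnessing $\wt{\hilbert}\subsetneq\hilbert$. Hence at least one product vector $\psi\otimes\varepsilon$ must be forbidden, which is the first assertion of the corollary.

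The ``too many states'' conclusion then drops out immediately: the tensor product construction manufactures a product orthonormal basis of $\hilbert$ from arbitrary bases of $\hilbert_S$ and $\hilbert_{\tn{rest}}$ independently of the dynamics, while by Remark \ref{rem:hilbert_init} the subspace $\wt{\hilbert}_0$ of admissible initial conditions is strictly (and in fact extremely) smaller. The only point requiring mild care is that the forbidden vectors built in Theorem \ref{thm:main} were constructed relative to the fine factorization $\hilbert_S\otimes\hilbert_{M_{\pobs{A}}}\otimes\hilbert_{M_{\pobs{B}}}\otimes\hilbert_{\mc{E}}$, whereas the corollary concerns a coarser $S$-versus-rest cut. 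I expect this to present no genuine obstacle because $\wt{\hilbert}$ is an intrinsic subspace of $\hilbert$, so the strict inclusion $\wt{\hilbert}\subsetneq\hilbert$ survives any regrouping of factors, and that strict inclusion is all the contradiction step needs.
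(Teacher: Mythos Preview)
Your proposal is correct and follows essentially the same approach as the paper: the paper's proof simply points to Refutation \ref{refute:sys_indep}, which is exactly the product-basis contradiction argument you spell out (if every tensor product basis vector lay in $\wt{\hilbert}$ then $\hilbert\subseteq\wt{\hilbert}$, contradicting $\wt{\hilbert}\subsetneq\hilbert$). Your version is more explicit about the arbitrary cut and the closure of $\wt{\hilbert}$ under spans, but the core idea is identical.
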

\begin{proof}
See Refutation \ref{refute:sys_indep} (of Belief \ref{belief:sys_indep}).
\end{proof}

Corollary \ref{thm:SI} suggests that the tensor-product Hilbert space may contain too many states, and we should consider a much smaller subspace instead.

\begin{question}
\label{q:why_QM_works}
Why, in all known examples, does textbook quantum mechanics work without fine-tuning?
\end{question}
\begin{answer}
\label{a:why_QM_works}
Textbook scenarios of quantum experiments make assumptions that implicitly fine-tune the system:

(a) measuring devices already exist, despite their construction being complicated, necessitating precision technology, and depending on the Hamiltonian to function as desired,

(b) before measurements, the measuring devices are in the ``ready'' states, like $\zeta^{\pobs{A}}_{0}$ and $\zeta^{\pobs{B}}_{0}$ in Equations \eqref{eq:measurement_a} and  \eqref{eq:measurement_ab},

(c) the observed systems and measuring devices are initially in separable states like $\psi\otimes\zeta^{\pobs{A}}_{0}$ in Equation \eqref{eq:measurement_a},

(d) after the measurement, the pointer states both before and after the measurement are known, so that the recovery of the state of the observed system is possible.

All examples, in all interpretations of quantum mechanics, assume (a), (b), (c), and (d), but this requires fine-tuning. Even MWI requires that branching be time-asymmetric, which constrains the initial conditions to ensure (a). Assumption (a) requires the initial conditions to depend on the Hamiltonian.
Linear combinations satisfying (b) and (c) are valid, but they form a strict subspace of the full tensor-product Hilbert space $\hilbert$.

If, as in the example from the proof of Theorem \ref{thm:main}, $\obs{A}=\obs{B}$ and  $j\neq k$, the state $\Psi_2(t_2)=\psi^{\pobs{B}}_{k}\otimes \zeta^{\pobs{A}}_{j}\otimes\zeta^{\pobs{B}}_{k}\otimes\ldots$ cannot be reached from the state $\Psi_1(t_1)=\psi^{\pobs{A}}_{j}\otimes \zeta^{\pobs{A}}_{j}\otimes\zeta^{\pobs{B}}_{0}\otimes\ldots$. This means that the pointer state $\zeta^{\pobs{A}}_{j}$ contained in $\Psi_2(t_2)$ is an invalid record since in the histories leading to $\Psi_2(t_2)$ there is no measurement of $\obs{A}$ in which the observed system was found at $t_1$ in the state $\psi^{\pobs{A}}_{j}$ and the pointer was in the corresponding state $\zeta^{\pobs{A}}_{j}$.
Therefore, the misleading state $\Psi_2(t_2)$ is excluded simply because the state at $t_1$ was assumed to be $\Psi_1(t_1)$.

Even if we are not aware of this, we usually take for granted records at different times. But all that is available to us are the present-time records of past events.
From these records or memories, we infer laws and make predictions about future times, and experiments confirm them.
This is possible because the invalid records are already excluded by the constraints of the initial conditions implicit in the assumptions (a), (b), (c), and (d).

All these and more become apparent when we try to build a model, particularly (d) makes the necessity of fine-tuning clear, as it will be seen in more detail in Section \sref{s:counterexample}, Problem \ref{problem:no-incompatible}.
\end{answer}

Theorem \ref{thm:main} and Corollary \ref{thm:SI} pose a puzzle for some common-sense beliefs about quantum mechanics and about the possibility of having reliable records of past events.
In Section \sref{s:counterexample}, I will try to make progress toward the resolution of this puzzle and address Challenge \ref{challenge:no_input_from_evol}.

\section{A Counterexample?}
\label{s:counterexample}

Let us try to address Challenge \ref{challenge:no_input_from_evol} by building a model.

I start with an ideal model, which seems to need the least fine-tuning possible. Since even general ideal quantum measurements turn out to require very special initial conditions, irreversibility is needed. Therefore, we must refine the model by including the Past Hypothesis and the possibility of bound states. The resulting model suggests that \mbox{Belief \ref{belief:no_input_from_future}} can still be true, but Belief \ref{belief:no_input_from_evol} is partially contradicted by the necessity to appeal to the dynamical law.

The first model is designed so that it contains the simplest possible measuring devices, each consisting of a single particle. The pointer state will be represented as an internal state. A measurement requires a limited duration of interaction between the observed system and the measuring device, and this is, in general, achieved by interactions with a limited range. Therefore, the measuring device and the observed system should be able to be well-localized and separated in space before and after the measurement.
Then, the model needs a space for positions, internal degrees of freedom, and local interactions.

The first model I propose consists of particles moving in the lattice $\Z^3$ and interacting when they occupy the same position in the lattice.
The time evolution takes place in discrete steps, so $t\in\Z$.
A continuous version of this model is possible, but for simplicity, I start with discrete space and time.

Each particle, labeled with $j\in\mc{J}$, has attached an internal Hilbert space $\mc{V}_j\otimes\mc{S}_j$.
It includes a velocity space $\mc{V}_j:=\Span(\V)^3$, where $\V=\{\ket{-1}^v,\ket{0}^v,\ket{1}^v\}$,
and an internal space $\mc{S}_j:=\Span\{\ket{a}^m|a\in\Z_m\}$, where $\Z_{m}:=\{0,1,\ldots,m-1\}$, representing either the pointer states or the internal degrees of freedom to be measured.

Since time is discrete, for the dynamical law, we only need a unitary operator $\uop{U}$ representing the evolution of the system for a unit of time.
The free evolution of each particle, labeled by $j$, is given by its own unitary operator
\begin{equation}
\label{eq:free-motion}
\uop{U}_j\ket{\x,\vel}_j\ket{a}_j=\ket{\x+\vel,\vel}_j\uop{U}^m_j\ket{a}_j,
\end{equation}
where $\vel\in\V$ and $a\in\Z_m$, and $\uop{U}^m_j$ acts on $\mc{S}_j$.

Please note that in this ideal quantum system, there is no uncertainty trade-off between the positions and momenta, and therefore particles can be well-localized ``wave packets''.

When two particles occupy the same position, their dynamics include interaction, which I specify to be
\begin{equation}
\label{eq:interaction}
\begin{aligned}
\uop{U}_{jk}&\ket{\x_j,\vel_j}_j\ket{a_j}_j^k\ket{\x_k,\vel_k}_k\ket{b_k}_k^j \\
=&\ket{\x_j+\vel_j,\vel_j}_j\uop{U}^m_j\ket{a_j\oplus \delta(\x_j-\x_k)b_k}_j^k \\
&\ket{\x_k+\vel_k,\vel_k}\uop{U}^m_k\ket{b_k\oplus \delta(\x_j-\x_k)a_j}_k^j,
\end{aligned}
\end{equation}
where $\oplus$ denotes the addition modulo $m$.

This requires some explanations.
The Kronecker symbol $\delta(\x_j-\x_k)=1$ if $\x_j=\x_k$ and $0$ otherwise (we are in a discrete model) forces the interaction to take place only when the two particles occupy the same position.
The upper index $k$ for the basis of $\mc{S}_j$ indicates that the basis also depends on the particle $k$ with which $j$ interacts.
The reason for using different bases for the same particle, a basis for each other particle with which it interacts, allows the existence of different incompatible measurements of the same particle $j$.
This was already ensured by the existence of $\uop{U}^m_j$, but we want the most general settings.
If $\x_j\neq\x_k$, the two particles evolve freely, as in Equation \eqref{eq:free-motion}.

Interactions should also be defined for the case when more than two particles occupy the same position, but I will discuss only interactions between two particles.

Some of the particles are of a special type, $\mc{J}_{M}\subset\mc{J}$, representing measuring devices, while for each $j\in\mc{J}_{M}$, all the other particles from $\mc{J}$ (not only those from $\mc{J}\setminus\mc{J}_{M}$) are the observed systems.
The pointer states are the elements of the basis of $\mc{S}_k$, for each $k\in\mc{J}_{M}$. The macro-states are characterized by the positions, velocities, and the pointer states of the particles from $\mc{J}_{M}$, {\ie} a triple $(\x_k,\vel_k,b_k)$.
The basis $\(\ket{b_k}_k^j\)_k$ from Equation \eqref{eq:interaction} is assumed, only for type $\mc{J}_{M}$ particles, to be independent of the particles $j$ with which they interact. The free evolution operator $\uop{U}^m_k$ for $k$ is taken to be the identity operator so that the pointer states are stable in time.

Let us see how an interaction between a particle $k$ from $\mc{J}_{M}$ and another particle $j$ constitutes a measurement of $j$ by $k$, when $\x_j=\x_k=\x$ and $\vel_j\neq \vel_k$.
If $b_k=0$, the pointer state $\ket{b_k}_k^j=\ket{0}_k\in\mc{S}_k$, so the measuring device $k$ is in the ``ready'' state.
If the observed particle $j$ is in a basis state $\ket{a_j}_j^k$, Equation \eqref{eq:interaction} gives
\begin{equation}
\label{eq:measurement_ready}
\begin{aligned}
\uop{U}_{jk}&\ket{\x,\vel_j}_j\ket{a_j}_j^k\ket{\x,\vel_k}_k\ket{0}_k \\
=&\ket{\x+\vel_j,\vel_j}_j\uop{U}^m_j\ket{a_j}_j^k \ket{\x+\vel_k,\vel_k}\ket{a_j}_k.
\end{aligned}
\end{equation}

We see that the particle $j$ was measured in the basis of $\mc{S}_j$ corresponding to the measuring device $k$, and the result is recorded as $\ket{a_j}_k\in\mc{S}_k$. After that, $\uop{U}^m_j$ may change the state of $j$, but it was already measured.
The measured observable has as eigenbasis the basis $\(\ket{a_j}_j^k\)_{a_j\in\Z_m}$.

Now, let us see how it works.
We assume that at the initial time $t_i=0$, the particles have well-defined positions. Equation \eqref{eq:interaction} ensures us that they remain well-defined.
The pointers decompose the total state so that each pointer is in a definite pointer eigenstate.
In SQM, we invoke the Projection Postulate, while in MWI, the total state branches.
When two particles occupy the same point in space, they interact. 
If one of them is a measuring device, this leads again to the superposition of macro-states, demanding again the invocation of the preferred solution to the measurement problem.

At this point, this model seems to contain all that is needed to describe a quantum world, and we may be tempted to proclaim that the problem was solved:
\begin{solution}[Premature claim]
\label{solution:is-it-solved}
In this simple ideal model, it seems that the fine-tuning is minimal, and the conspiracy is avoided.
The only fine-tuning needed is that the initial state consists of particles with definite positions.
Then, each interaction with a measuring device (particle from $\mc{J}_{M}$) constitutes a measurement. The positions of the measuring devices, along with their pointer states, can be used to characterize the macro-states.
Moreover, if we change the free evolution operators $\uop{U}^m_j$ and the bases from the interaction evolution operators in Equation \eqref{eq:interaction}, the system evolves properly, 
leading to different histories, whose states contain different records, but these records are consistent with the history of the system.
Therefore, the initial state does not have to depend on the dynamical law or of the future records since we can change it and still have only compatible records.
\qed
\end{solution}

However, two problems show that the claims from Solution \ref{solution:is-it-solved} would be rushed.
These problems are not specific to this model. They must be solved for any other model.

The first one is particular to interpretations requiring decoherence, like MWI and PWT:

\begin{problem}
\label{problem:branching}
There is no definite branching structure.
\end{problem}

The terms in superposition, corresponding to different macro-states, can evolve so that they contain the same pointer states, which means that the branches interfere or even are joined back into a single branch.
For example, consider the interaction
\begin{equation}
\label{eq:measurement_notready}
\begin{aligned}
\uop{U}_{jk}&\ket{\x,\vel_j}_j\ket{a_j\ominus 1}_j^k\ket{\x,\vel_k}_k\ket{1}_k \\
=&\ket{\x+\vel_j,\vel_j}_j\\
&\otimes\uop{U}^m_j\ket{(a_j\ominus 1)\oplus 1}_j^k \ket{\x+\vel_k,\vel_k}\ket{(a_j\ominus 1)\oplus 1}_k  \\
=&\ket{\x+\vel_j,\vel_j}_j\uop{U}^m_j\ket{a_j}_j^k \ket{\x+\vel_k,\vel_k}\ket{a_j}_k,
\end{aligned}
\end{equation}
where $\ominus$ denotes the subtraction modulo $m$.

Here, the final macro-state, the state of the measuring device, is $(\x+\vel_k,\vel_k,a_j)$, and it is the same as that from Equation \eqref{eq:measurement_ready}, but the macro-state immediately before the interaction was $(\x,\vel_k,1)$, so it was different from the prior macro-state from Equation \eqref{eq:measurement_ready}, which was $(\x,\vel_k,0)$.
This means that two macro-states can interfere. A superposition of the two prior macro-states from Equations \eqref{eq:measurement_ready} and \eqref{eq:measurement_notready} will evolve into the same macro-state.
In other words, branching happens towards the past in this example.
The branching structure is not a tree with the trunk in the past and the branches in the future, as it should be.
Since the interference of separate branches spoils the Born rule, we need to fix this by choosing suitable initial conditions.

Another problem, more important, is that
\begin{problem}
\label{problem:no-incompatible}
At any given time, the pointer states contain insufficient information to encode the states of the observed systems.
\emph{This is true even for more sophisticated models of quantum measurements, as long as the state of the pointer can be known only by observing it.}
\end{problem}

Let us see why.
If the pointer state is not in the ``ready'' state $\ket{0}_k$ before the measurement, 
the particle $k$ performs a disturbing measurement,
and the pointer no longer indicates the state of the observed particle,
\begin{equation}
\label{eq:measurement_not-ready}
\begin{aligned}
\uop{U}_{jk}&\ket{\x,\vel_j}_j\ket{a_j}_j^k\ket{\x,\vel_k}_k\ket{b_k}_k \\
=&\ket{\x+\vel_j,\vel_j}_j\uop{U}^m_j\ket{a_j\oplus b_k}_j^k \\
&\ket{\x+\vel_k,\vel_k}\ket{a_j\oplus b_k}_k.
\end{aligned}
\end{equation}

Therefore, while the pointer correctly indicates the state of the observed system after the measurement, the interaction disturbed it, and the observed system's state before the measurement is now unknown, even if it was an eigenstate. To know the value $a_j$ before the measurement, we need to know the pointer's state $\ket{b_k}_k$ before the measurement.
Our intuition may fool us into thinking that both $b_k$ and $a_j\oplus b_k$ can be known, but at the time when $a_j\oplus b_k$ is known, $b_k$ is no longer available.
To make it available, the pointer of $k$ must be measured by another measuring device $k'\in\mc{J}_{M}$.
But the measuring device $k'$ has the same problem since we would have to know its pointer state before and after it interacts with the measuring device $k$.
This leads to an infinite regress.
This happens for all models of measurements, as long as it is assumed that the pointer state can be known only by\linebreak direct observation.

Albert wrote about such situations in \cite{DavidZAlbert2000TimeAndChance}, p. 118:
\begin{quote}
There must [...] be something we can be in a position to \emph{assume} about some other time--something of which we \emph{have} no record; something which cannot be inferred from the present by means of prediction/retrodiction--the mother (as it were) of all ready conditions. And this mother must be \emph{prior in time} to everything of which we can potentially ever \emph{have} a record, which is to say that it can be nothing other than the initial macrocondition of the universe as a whole.
\end{quote}

He is talking about what he calls the \emph{Past Hypothesis}, the condition proposed by Boltzmann \cite{Boltzmann1910VorlesungenUberGastheorie,Boltzmann1964LecturesOnGasTheory} to explain the Second Law of Thermodynamics, that the universe was a very long time ago in a very low-entropy state.

But how can this allow the measuring device $k'$ to measure the previous state of the measuring device $k$?
For example, in practice, a photographic plate can be used to record the position of a particle. We do not need to observe the photographic plate before the experiment. We prepare it, assuming that thermalization does the job. The same is true for a bubble chamber. 
Eventually, any real-life measurement requires the Past Hypothesis.

So, let us assume that some of our particles act like a gas or a thermal bath that brings the pointer state of the measuring devices to the ready state, assumed to be its energy ground state. The time interval needed to reach equilibrium should allow the record to be preserved long enough, and the pointer state is more stable in its ground state. With properly chosen parameters, this may solve our Problem \ref{problem:no-incompatible}.
Let these particles form a subset $\mc{J}_{E}\subset\mc{J}\setminus\mc{J}_{M}$. We assume $\mc{J}_{E}$ to make the overwhelming majority of the particles from $\mc{J}$ so that each measuring device is involved, between two measurements, in many weak interactions with the particles from $\mc{J}_{E}$.

At this point, our model turns out, again, to be too simple.
Different couplings require different interaction times, so the interaction times should vary more. The velocities should also vary to account for the entropy.
However, the model can be made more realistic by modifying it so that each interaction can have a different range, requiring a longer time to change the pointer state into another eigenstate by thermalization.

Another problem is that the interaction between the particles from $\mc{J}_{E}$ and those from $\mc{J}_{M}$ are, like any interaction \eqref{eq:interaction}, time-reversible.
We recall that the decay of excited atoms or particles with finite lifetime is perhaps the simplest irreversible process. In the Weisskopf-Wigner approximation \cite{WeisskopfWigner1930CalculationOfTheNaturalBrightnessOfSpectralLines}, an excited atom evolves into a superposition of an excited atom state and a ground atom-plus-photon state, and the coefficient of the excited term decays exponentially.
The irreversibility is due to the fact that the emitted photon quickly moves away from the atom.
A time-reversed situation is possible, of course, and consists of the photon being absorbed by the atom, which becomes excited.
But the Past Hypothesis makes the events of absorption of a photon coming from a distance become, in time, less frequent than the emission events, so the most stable state becomes the ground state.

\begin{solution}[More realistic]
\label{solution:realistic}
Our model can be made more realistic by including, along with weak couplings with the thermal bath, bound states that constitute the pointers of the measuring device so that the bath changes each pointer state to the ``ready'' state after a reasonable time.
This should allow the Past Hypothesis to be applicable to the model, which should allow the records to be kept like in the real world, likely solving Problem \ref{problem:no-incompatible}. A thermal bath may also solve Problem \ref{problem:branching}, by {environmental decoherence} \cite{Zurek1991DecoherenceAndTheTransitionFromQuantumToClassical,Zeh1996DecoherenceAndTheAppearanceOfAClassicalWorldInQuantumTheory,Wallace2012TheEmergentMultiverseQuantumTheoryEverettInterpretation}.
\end{solution}

\begin{observation}[on Solution \ref{solution:realistic}]
\label{obs:solution:realistic}
We have seen that the original ideal model is unable to extract the states of the observed systems and record them. A more realistic version requires the Past Hypothesis and measuring devices constructed of bound states.
However, the updated model needs weak couplings and bound states with carefully chosen parameters, so it depends on the dynamical law.
By changing the Hamiltonian, the measuring devices can stop working as such, and the observables defining the macro-states may no longer do this job.
The Past Hypothesis should not merely state that the initial macro-state or the initial subspace $\wt{\hilbert}_0$ was extremely small, but also that the dynamical law matters in the choice of this region.
This implies that Belief \ref{belief:no_input_from_evol} is contradicted at least partially, and Challenge \ref{challenge:no_input_from_evol} is still open.
\end{observation}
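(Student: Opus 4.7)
The plan is to assemble the Observation's conclusion from the obstructions already identified in the preceding Problems and Solutions. First I would note that the ideal model's inability to record observable states is already pinned down by Problem \ref{problem:no-incompatible}: the measurement interaction \eqref{eq:interaction} only produces a faithful record when the pointer is in the ready state $\ket{0}_k$, and under \eqref{eq:measurement_not-ready} any nonzero $b_k$ corrupts the recorded value by an unknown additive shift. Because the bases $\(\ket{b_k}_k^j\)$ are not observationally accessible except through further measurements, iterating this argument yields Albert's infinite regress, so the ``ready'' status of every pointer must be underwritten by a condition that is not itself a record—namely the Past Hypothesis applied to the initial subspace $\wt{\hilbert}_0$.

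Next I would argue, for the second and third clauses of the Observation, that implementing such a Past Hypothesis inside this specific model forces extra structure on the dynamical law. Since the free evolution $\uop{U}^m_k$ on $\mc{S}_k$ was set to the identity precisely to keep pointer eigenstates stable, but relaxation to a preferred $\ket{0}_k$ requires a \emph{non-trivial} effective dynamics on $\mc{S}_k$ driven by the bath $\mc{J}_{E}$, the reset mechanism can only exist if $\uop{U}$ contains couplings $\uop{U}_{jk}$ whose interaction bases and strengths produce bound-state spectra with $\ket{0}_k$ as the energy ground state, together with weak enough coupling to $\mc{J}_{E}$ to preserve macroscopic records between measurements. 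These are simultaneous constraints on $\uop{U}_{jk}$, $\uop{U}^m_j$, and the bath size $|\mc{J}_{E}|$, i.e.\ on the dynamical law itself.

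The core step for the fourth clause is to exhibit, at least schematically, a perturbation $\uop{U}\mapsto\uop{U}'$ under which the \emph{same} initial data in $\wt{\hilbert}_0$ no longer satisfies Metaprinciple {\NMU}. I would do this by changing either the interaction bases $\ket{b_k}_k^j$ or the internal operators $\uop{U}^m_j$ on the $\mc{J}_{M}$-particles so that (i) the would-be pointer ground state is no longer a fixed point of the thermalised dynamics, or (ii) the effective macro-state projectors used in Refutation \ref{refute:universality} no longer commute with the new pointer observables. In either case the evolved state can populate forbidden branches in the sense of Observation \ref{obs:forbidden} from initial data previously declared safe, which demonstrates that the specification of $\wt{\hilbert}_0$ cannot be independent of $\uop{U}$. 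This is the partial contradiction of Belief \ref{belief:no_input_from_evol} asserted in the Observation.

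Finally, I would close by recording why Challenge \ref{challenge:no_input_from_evol} remains open: the argument only shows that $\wt{\hilbert}_0$ must co-vary with $\uop{U}$ for a particular family of perturbations tailored to break the bound-state or thermalisation conditions; it does not yet provide a quantitative map $\uop{U}\mapsto\wt{\hilbert}_0(\uop{U})$ restricted to local, tensor-product-preserving modifications of $\obs{H}$. I expect the main obstacle to lie precisely here—separating the genuine dependence on $\uop{U}$ from dependence on auxiliary choices (bath composition, what counts as a macro-observable, the direction of the arrow of time) without tacitly smuggling those choices into the statement of the Past Hypothesis itself.
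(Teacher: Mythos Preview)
Your proposal is correct and tracks the paper's own reasoning closely: in the paper, Observation \ref{obs:solution:realistic} is not given a separate proof but is the summary of the Section \sref{s:counterexample} discussion, and you have faithfully reconstructed that summary from Problems \ref{problem:branching}--\ref{problem:no-incompatible} and Solution \ref{solution:realistic}. Your perturbation step (changing the interaction bases or $\uop{U}^m_j$ so that pointers lose their ground-state/reset property) is more explicit than the paper, which only asserts in one sentence that ``by changing the Hamiltonian the measuring devices can stop working as such''; this added concreteness is a mild improvement, and your closing caveat about the lack of a quantitative map $\uop{U}\mapsto\wt{\hilbert}_0(\uop{U})$ matches the paper's stance that Challenge \ref{challenge:no_input_from_evol} remains open.
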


\section{Past Hypothesis}
\label{s:PH}

\begin{question}
\label{q:PH}
It is thought that the Past Hypothesis and the Statistical Postulate are sufficient to ensure the reliability of the records.
Does Theorem \ref{thm:main} challenge this?
\end{question}
\begin{answer}
\label{a:PH}
{There are numerous good arguments that the Past Hypothesis (PH) and the Statistical Postulate together are sufficient to explain the reliability of the records. An incomplete selection that cannot do justice to the literature is \cite{Feynman1967TheCharacterOfPhysicalLaw,Penrose1979SingularitiesAndTimeAsymmetry,DavidZAlbert2000TimeAndChance,Loewer2012TheEmergenceOfTimesArrowsAndSpecialScienceLawsFromPhysics,BarryLoewer2020MentaculusVision,EKChen2021BellsTheoremQuantumProbabilitiesSuperdeterminism,EKChen2022FundamentalNomicVagueness}.
Critiques of the Boltzmann program can be found {\eg} in \cite{Earman2006ThePastHypothesisNotEvenFalse,Gryb2021NewDifficultiesForThePastHypothesis}.}

Usually, the Past Hypothesis is formulated as the request that the initial state should be confined to a very restricted initial Hilbert subspace or a relatively very small region of the phase space. This, according to Boltzmann, would ensure the very low entropy of the initial state, necessary for the statistical account of the Second Law of Thermodynamics. But Theorem \ref{thm:main} and Corollary \ref{thm:SI} imply that there is more to the story, in particular, that the Past Hypothesis must depend on the dynamical law and violate Statistical Independence or be extended with an additional condition or a law that guarantees these conditions.
The attempt from Section \sref{s:counterexample} to build a model that escapes Theorem \ref{thm:main} confirms this and may suggest how the dynamical law should intervene in the Past Hypothesis, even in a model that is intentionally built to separate as much as possible the initial conditions from the dynamics.
\qed
\end{answer}

\begin{question}
\label{q:anthropic}
In an infinitely large universe or a multiverse, with an eternal history, this apparent fine-tuning happens with necessity somewhere or sometimes, albeit extremely rarely.
Since intelligent living beings like us need reliable memories to exist and survive, they can only exist in such a region. Is this not enough to explain away the fine-tuning?
\end{question}
\begin{answer}
\label{a:anthropic}
If we were in such a region where it just happened that the Metaprinciple {\NMU} was respected up to a time $t_1$, it would be extremely more likely that {\NMU} will be violated than not after $t_1$.
Subsystems able to record events, like measuring devices in the ``ready'' state, are a limited resource.
But then, we should already observe that the Born rule ``wears off'' and the world becomes flooded by unreliable records, becoming increasingly inconsistent, like a dream.
Also, by an indexical argument, if we rely on anthropic reasoning, it would be much more likely that an observer finds herself as a Boltzmann brain in a region where {\NMU} is violated.
\qed
\end{answer}

\begin{question}
\label{q:classical}
Is it not the case that these results apply to classical physics as well?
\end{question}
\begin{answer}
\label{a:classical}
{This is correct. This article is about quantum mechanics because the world is quantum, but the main idea applies to classical physics, too.
The following result proves it.}
\begin{corollary}
\label{thm:main:classical}
The conclusion of Theorem \ref{thm:main} applies to classical physics, too.
\end{corollary}
\begin{proof}
As shown by Koopman and von Neumann, classical physics can be formulated as a quantum theory \cite{Koopman1931HamiltonianSystemsAndTransformationInHilbertSpace,vonNeumann1932KoopmanMethod,Stoica2022VersatilityOfTranslations}.
In Koopman's quantum representation of a classical theory, the wavefunctions are defined on the classical phase space (not on the classical configuration space as we do when quantizing a classical system).
Each classical state $(\q,\p)$, where $\q=(q^1,q^2,\ldots)$ are the generalized positions and $\p=(p_1,p_2,\ldots)$ the generalized momenta, is then represented by a Dirac delta distribution, represented as a state vector by $\ket{\q,\p}$, and the dynamics preserves classicality, {\ie} it evolves states of the form $\ket{\q,\p}$ into states of the same form.
The classical states $\ket{\q,\p}$ have both definite positions and definite momenta, and this is preserved in the quantum representation, where $\wh{\q}$ and $\wh{\p}$ commute (because $\wh{p}_j$ and $-i\hbar\frac{\partial\ }{\partial q^j}$ are different operators).
Since all classical observables are functions $f(\q,\p)$ of the positions and momenta, the operators representing them, $f(\wh{\q},\wh{\p})$, also commute, and $\ket{\q,\p}$ are eigenstates for $f(\wh{\q},\wh{\p})$ too.
So classical physics can be formulated as a quantum theory in which all relevant observables commute (note that the operator $-i\hbar\frac{\partial\ }{\partial q^j}$ does not correspond to any classical observable, it is not $\wh{p}_j$, so it is not part of the theory, and there is no use for the Planck constant).
The only vectors of interest in the Hilbert space $\hilbert$ are those representing classical states of the form $\ket{\q,\p}$.
Superpositions of the basis vectors can be used not to represent states but classical ensembles.

The proof of Theorem \ref{thm:main} is limited to commuting observables and shows that the set of allowed states must be the classical vectors $\ket{\q,\p}$ from a very restricted Hilbert subspace $\wt{\hilbert}_0\ll\wt{\hilbert}$.
Returning to the original classical theory, this corresponds to a very small, zero-measure subset of the classical phase space.
\end{proof}

Therefore, except for the parts involving non-commuting observables, the entire discussion from this article applies to classical physics as well.
\end{answer}

\section{Statistical Independence}
\label{s:SI}

\begin{question}
\label{q:SI}
Does Corollary \ref{thm:SI} refute the Statistical Independence assumption used by Bell in his proof that any interpretation of quantum mechanics in which measurements have definite outcomes must be nonlocal \cite{Bell64BellTheorem,Bell2004SpeakableUnspeakable}?
\end{question}
\begin{answer}
\label{a:SI}
{Short answer: no, but before addressing this question, let us establish that the experiment of interest in Bell's theorem is covered in the proof of Theorem \ref{thm:main}.}

\begin{example}[EPR]
\label{ex:EPR}
The Einstein–Podolsky–Rosen (EPR) experiment \cite{EPR35} is a particular case of the example from the proof.
At $t_1$, the preparation results in a singlet state of two entangled spin $1/2$ particles, with total spin $0$. The preparation is made by taking as $\obs{A}$ an observable that has the singlet state among its eigenstates and pre-selecting this state.
At $t_2$, Alice measures the spin of the first particle along a direction $\mathbf{a}$, and Bob, in a different place, measures the spin of the second particle along a direction $\mathbf{b}$. Let $\obs{S}_{\mathbf{a}}$ and $\obs{S}_{\mathbf{b}}$ be the two spin observables. Since the two measurements are performed on different particles, they commute, $\left[\obs{S}_{\mathbf{a}}\otimes\obs{I}_2,\obs{I}_2\otimes\obs{S}_{\mathbf{b}}\right]=0$, and can be seen as a single measurement of the observable $\obs{B}=\obs{S}_{\mathbf{a}}\otimes\obs{S}_{\mathbf{b}}$ performed on the pair of particles.
If $\mathbf{a}=\mathbf{b}$, the misleading outcomes are those resulting in parallel spins.
But the states containing pointer states that correspond to these results are valid vectors in $\hilbert$. So, the initial conditions that can lead to these states have to be forbidden, which means that SI from Definition \ref{def:SI} must be violated.
\qed
\end{example}

If, in Definition \ref{def:SI}, $\Prob\{B\} > 0$, SI is equivalent to $\Prob\{A|B\} = \Prob\{A\}$, the form that appears in Bell's theorem \cite{Bell2004SpeakableUnspeakable}.
But Bell's SI refers to the independence of the observed system from the measurement settings, and it is not refuted by Corollary \ref{thm:SI}.
Corollary \ref{thm:SI} allows the observed system to be in any state $\ket{\psi}\in\hilbert_1$, provided that the rest of the world $\ket{\varepsilon}$ is restricted to a strict subset of its Hilbert space $\hilbert_2$.
In Bell's SI, the only considered states $\ket{\varepsilon}\in\hilbert_2$ are those containing Alice and Bob's measuring devices.
Moreover, the settings of the measuring devices have macroscopic properties that can be realized in infinitely many ways at the microscopic level.
This means that $\ket{\varepsilon}$ can be chosen in many ways that are compatible with the macroscopic settings of the measuring devices.
To violate Bell's SI assumption, all possible microscopic ways to realize the same measurement settings should be excluded, and Corollary \ref{thm:SI} does not do this.

This is why Corollary \ref{thm:SI} does not contradict Bell's SI, while still contradicting the SI from Definition \ref{def:SI}.
The EPR experiment is a particular case of the situation from the proof of Theorem \ref{thm:main} (Example \ref{ex:EPR}).
\qed
\end{answer}


\begin{question}
\label{q:superdeterminism}
Does Theorem \ref{thm:main} imply superdeterminism?
\end{question}
\begin{answer}
\label{a:superdeterminism}
Theorem \ref{thm:main} does not refute nonlocal interpretations like PWT or GRW, or multiple-outcome interpretations like MWI. However, if SI is a reason to reject superdeterministic approaches, Corollary \ref{thm:SI} shows that all interpretations are guilty of the same sin, but to a lesser extent.
There is still a large difference of degree between violating the SI from Definition \ref{def:SI} and violating Bell's SI.
\qed
\end{answer}

\begin{question}
\label{q:unscientific}
Isn't the violation of Statistical Independence unscientific \cite{ShimonyHorneClauser1976CommentOnTheTheoryOfLocalBeables,Araujo2016UnderstandingBellsTheoremPart1,Maudlin2019BellsOtherAssumption,EKChen2021BellsTheoremQuantumProbabilitiesSuperdeterminism}? By fine-tuning you can make the theory predict anything.
According to Maudlin:

\begin{quote}
\emph{If we fail to make this sort of statistical independence assumption, empirical science can no longer be done at all.}
\end{quote}

Bell wrote (\cite{Bell1990LaNouvelleCuisine}, p. 244):
\begin{quote}
\emph{In such `superdeterministic' theories the apparent free will of experimenters, and any other apparent randomness, would be illusory. Perhaps such a theory could be both locally causal and in agreement with quantum mechanical predictions. However I do not expect to see a serious theory of this kind.}
\end{quote}
\end{question}
\begin{answer}
\label{a:unscientific}

{Now we have a theorem showing that quantum mechanics itself requires the violation of SI (Definition \ref{def:SI}) but not of Bell's SI (Answer \ref{a:SI}).}

{We cannot do science without the possibility to trust the records of past experiments and our own memory. And this, as we have seen, requires SI violations.
Does this mean that we can no longer do science?}

{There are proposals that try to save locality by sacrificing Bell's SI \cite{deBeauregard1977TimeSymmetryEinsteinParadox,Cramer1986TransactionalInterpretationOfQuantumMechanics,AharonovVaidman2007TheTwoStateVectorFormalismAnUpdatedReview,Price2008ToyModelsForRetrocausality,WhartonArgaman2019BellThereomAndSpacetimeBasedQM}.
Other proposals even try to save locality but also to maintain unitary evolution with a single world, {\ie} without branching, and without collapse or ``hidden variables'' \cite{schulman1997timeArrowsAndQuantumMeasurement,tHooft2016CellularAutomatonInterpretationQM,Stoica2021PostDeterminedBlockUniverse}. These approaches require very special initial conditions \cite{Stoica2012QMQuantumMeasurementAndInitialConditions}.
But, as Theorem \ref{thm:main} shows, so does quantum mechanics in all interpretations, and it also violates SI from Definition \ref{def:SI}.}

However, it is unfair to say that you can predict anything by fine-tuning, because approaches with unitary evolution and a single world could so far only reproduce very simple quantum experiments.
In addition, as explained in \cite{schulman1997timeArrowsAndQuantumMeasurement} and \cite{Stoica2021PostDeterminedBlockUniverse}, these approaches make very strict predictions, for example, that the conservation laws hold even if their violation is allowed by SQM \cite{Burgos1993ConservationLawsQM,Stoica2021PostDeterminedBlockUniverse} and other interpretations, including in MWI (per branch).
\qed

\end{answer}

\section{New Law?}
\label{s:new-law}

\begin{question}
\label{q:superselection}
Do you have another explanation?
\end{question}
\begin{answer}
\label{a:superselection}
{Maybe there is a still unknown law that restricts the possible states to $\wt{\hilbert}$. 
Since subsystems are not independent (Corollary \ref{thm:SI}),
the tensor-product Hilbert space is too large and should be replaced by its subspace $\wt{\hilbert}$.
Since the restrictions do not depend on time (Remark \ref{rem:hilbert_init}), $\wt{\hilbert}$ should be an invariant subspace under unitary evolution.
This justifies the hypothesis that there is an as-yet-unknown law that specifies what kinds of states are allowed and extends the Past Hypothesis to include the dependence of the allowed Hilbert subspace $\wt{\hilbert}\ll\hilbert$ on the dynamical law.
This may be a superselection rule, similar to the superselection rules that forbid superpositions of systems with different electric charges or different spins \cite{WickWightmanWigner1952TheIntrinsicParityOfElementaryParticles}.
If such a law or superselection rule exists for $\wt{\hilbert}$, it could explain the SI violations without fine-tuning.
But such a law would have to encode the dependence of $\wt{\hilbert}$ of the Hamiltonian and the macro projectors
 $\(\obs{P}_{\alpha}\)_{\alpha}$.}
\qed
\end{answer}

Therefore, the challenge is
\begin{challenge}
\label{challenge:law-standard}
Find a simple universal law that describes the restriction of the Hilbert space to the subspace $\wt{\hilbert}$ from the proof of Theorem \ref{thm:main} without fine-tuning.
Then verify whether this law violates Bell's SI or only the general SI from Definition \ref{def:SI}.
\end{challenge}

It remains to be seen whether or not this new law restricts the Hilbert space more than is needed for standard quantum mechanics so that Bell's SI is violated sufficiently to allow unitary evolution to resolve the measurements without collapse, as suggested in proposals like \cite{schulman1997timeArrowsAndQuantumMeasurement,tHooft2016CellularAutomatonInterpretationQM,Stoica2021PostDeterminedBlockUniverse}.

\vspace{0.2in}
\textbf{Acknowledgement}{The author thanks Eddy Keming Chen, Lawrence B. Crowell, Daniel Harlow, Tim Maudlin, Peter Warwick Morgan, George Musser Jr., Lawrence S. Schulman, Jochen Szangolies, the referees and others for valuable comments and suggestions offered to a previous version of the manuscript, which helped the author improve the clarity of the explanations. Nevertheless, the author bears full responsibility for the article.} 


\end{document}